\newtheorem{theorem}{Theorem}[section]
\newtheorem{definition}[theorem]{Definition}
\newtheorem{assumption}[theorem]{Assumption}
\newtheorem{lemma}[theorem]{Lemma}
\newtheorem{remark}[theorem]{Remark}
\newtheorem{example}[theorem]{Example}
\newtheorem{proposition}[theorem]{Proposition}  
\newtheorem{problem}[theorem]{Problem}
\newcommand{\naturals}{\mathbb{N}}
\newcommand{\real}{\mathbb{R}}
\newcommand{\cplx}{\mathbb{C}}
\newcommand{\range}{\mathcal{R}}
\newcommand{\Fc}{\mathcal{F}}
\newcommand{\Kc}{\mathcal{K}}
\newcommand{\Mc}{\mathcal{M}}
\newcommand{\Pc}{\mathcal{P}}
\newcommand{\Sc}{\mathcal{S}}
\newcommand{\Vc}{\mathcal{V}}
\newcommand{\Wc}{\mathcal{W}}
\newcommand{\Pf}{\mathfrak{P}}
\newcommand{\EDMD}[3]{\operatorname{EDMD}(#1,#2,#3)}
\newcommand{\Kedmd}{K_{\operatorname{EDMD}}}
\newcommand{\until}[1]{\{1,\dots,#1\}}
\newcommand{\dx}{D(X)}
\newcommand{\dy}{D(Y)}
\newcommand{\tdx}{\tilde{D}(X)}
\newcommand{\tdy}{\tilde{D}(Y)}
\newcommand{\Span}{\operatorname{span}}
\newcommand{\ic}{\mathcal{I}_C}
\newcommand{\spec}{\operatorname{spec}}
\newcommand{\specnz}{\operatorname{spec}_{\neq 0}}
\newcommand{\sprad}{\operatorname{sprad}}
\newcommand{\rrmse}{\operatorname{RRMSE}}
\newcommand{\rrmsemax}{\operatorname{RRMSE_{\max}}}
\newcommand{\longthmtitle}[1]{\mbox{}{\textit{(#1):}}}
\newcommand{\setdef}[2]{\{#1 \; | \; #2\}}
\newcommand{\oprocendsymbol}{\hbox{$\square$}}
\newcommand{\oprocend}{\relax\ifmmode\else\unskip\hfill\fi\oprocendsymbol}
\def\eqoprocend{\tag*{$\square$}}
\title{\Large \bf Temporal Forward-Backward Consistency, Not Residual
Error, Measures the Prediction Accuracy of Extended Dynamic
Mode Decomposition
\thanks{This work was supported by
	ONR Award N00014-18-1-2828 and NSF Award IIS-2007141.}}
\author{Masih Haseli \quad Jorge Cort\'es
\thanks{The authors are with Department of Mechanical and Aerospace
Engineering, UC San Diego, {\tt\small
\{mhaseli,cortes\}@ucsd.edu}}
}
\begin{document}

\maketitle

\begin{abstract}
Extended Dynamic Mode Decomposition (EDMD) is a popular data-driven
method to approximate the action of the Koopman operator on a linear
function space spanned by a dictionary of functions. The accuracy of
EDMD model critically depends on the quality of the particular
dictionary span\footnote{We consider the prediction accuracy
of EDMD for all (uncountably) functions in the dictionary span,
as opposed to only finitely many functions.}, specifically on
how close it is to being invariant under the Koopman operator.
Motivated by the observation that the residual error of EDMD,
typically used for dictionary learning, does not encode the quality
of the function space and is sensitive to the choice of basis, we
introduce the novel concept of consistency index.  We show that this
measure, based on using EDMD forward and backward in time, enjoys a
number of desirable qualities that make it suitable for data-driven
modeling of dynamical systems: it measures the quality of the
function space, it is invariant under the choice of basis, can be
computed in closed form from the data, and provides a tight
upper-bound for the relative root mean square error of all function
predictions on the entire span of the dictionary.
\end{abstract}

\section{Introduction}
Koopman operator theory has gained widespread attention in recent
years for the study of dynamical systems, chiefly thanks to the linear
structure of the operator despite nonlinearities in the system. In
fact, the linearity of the Koopman operator leads to computationally
efficient formulations to work with data.  This provides a
theoretically principled and explainable approach to the challenges
posed by incorporating data-driven methods in the modeling, analysis,
and control of dynamical systems.  Given that the Koopman operator is
generally infinite-dimensional, finding accurate finite-dimensional
approximations for its action is of utmost importance. The first step
to do so 
is to have a proper way to efficiently measure the quality of the
subspace. This measure in turn can be used for subspace identification
in optimization or neural network-based learning
methods. Defining such a measure is, in turn, the goal of this
paper.

\textit{Literature Review:} The Koopman operator~\cite{BOK:31}
provides an alternative representation of the evolution of dynamical
systems in terms of observables rather than system
trajectories. Despite possible nonlinearities in the system, the
eigenfunctions of the Koopman operator evaluated on the system's
trajectories have linear temporal evolution, and this leads to
efficient numerical methods used in complex system
analysis~\cite{IM:05,SPN-SS-EY:20}, estimation~\cite{MN-LM:18},
control~\cite{MK-IM-automatica:18,SP-SK:17,DG-DAP:21,VZ-EB:22-arxiv},
and robotics~\cite{GM-MLC-XT-TDM:21,LS-KK:21}, to name a few. Despite
these appealing applications, the infinite-dimensional nature of the
operator makes its direct use on digital computers challenging.
Dynamic Mode Decomposition (DMD)~\cite{PJS:10} and its generalization
Extended Dynamic Mode Decomposition (EDMD)~\cite{MOW-IGK-CWR:15} are
popular data-driven methods to approximate the action of the Koopman
operator on finite-dimensional spaces. EDMD in particular uses a
dictionary of functions whose span specifies the finite-dimensional
space of choice.  The work in~\cite{HL-DMT:20} studies the prediction
accuracy of DMD while~\cite{MK-IM:18} provides several convergence
results for EDMD as the number of data points and dimension of the
dictionary go to infinity.  The dependence of the EDMD's prediction
accuracy on the choice of the dictionary has led to a search for
dictionaries whose span is close to being invariant under the Koopman
operator~\cite{SLB-BWB-JLP-JNK:16}. The works
in~\cite{QL-FD-EMB-IGK:17,NT-YK-TY:17} use methods based on neural
networks for this task, while~\cite{MK-IM:20} directly learns the
Koopman eigenfunctions spanning invariant subspaces. Moreover, the
works in~\cite{PB-MB-SK-AL-SS-SH:21,FF-BY-DR-GS-IRM:21} approximate
finite-dimensional Koopman models relying on knowledge about the
system's attractors and their stability.  In our previous work, we
have provided efficient algebraic algorithms to identify exact
Koopman-invariant subspaces~\cite{MH-JC:22-tac,MH-JC:21-tcns} or
approximate them with tunable predefined
accuracy~\cite{MH-JC:21-auto}.

\textit{Statement of Contributions:} Our starting point\footnote{
We use the following notation. The symbols $\naturals$, $\real$, and
$\cplx$, represent the sets of natural, real, and complex numbers
resp.  Given $A \in \cplx^{m \times n}$, we denote its
transpose, pseudo-inverse, conjugate transpose, Frobenius norm and
range space by $A^T$, $A^\dagger$, $A^H$, $\|A\|_F$, and $\range(A)$
resp. If $A$ is square, we use $A^{-1}$ to denote its
inverse. We denote by $\spec(A)$ the spectrum of $A$. Similarly,
$\specnz(A)$ denotes the set of nonzero eigenvalues of
$A$. Moreover,
$\sprad(A) := \max \setdef{|\lambda|}{\lambda \in \spec(A)}$ is the
spectral radius of $A$.  If $\spec(A) \subset \real$, then
$\lambda_{\min}(A)$ and $\lambda_{\max}(A)$ denote the smallest and
largest eigenvalues of~$A$.  We use $I_m$ and
$\mathbf{0}_{m\times n}$ to denote the $m \times m$ identity matrix
and $m\times n$ zero matrix (we drop the indices when
appropriate). We denote the $2$-norm of the vector $v \in \cplx^n$
by $\|v\|_2$. Given sets $S_1$ and $S_2$, their union and
intersection are represented by $S_1 \cup S_2$ and $S_1 \cap
S_2$. Also, $S_1 \subseteq S_2$ and $S_1 \subsetneq S_2$
resp. mean that $S_1$ is a subset and proper subset of
$S_2$. Given the vector space $\Vc$ defined on the field $\cplx$,
$\dim \Vc$ denotes its dimension. Moreover, given a set
$\Sc \subseteq \Vc$, $\Span(\Sc)$ is a vector space comprised of all
linear combinations of elements in $\Sc$. If vectors
$v, w \in \real^n$ and vector spaces $\Vc, \Wc \subseteq \real^n$
are orthogonal, we write $v \perp w$ and $\Vc \perp \Wc$. Moreover,
$\Vc ^\perp$ denotes the orthogonal complement of $\Vc$. Given
functions $f$ and $g$ with appropriate domains and co-domains,
$f \circ g$ denotes their composition.}  is the
observation that the residual error of EDMD, typically used for
dictionary learning, does not necessarily measure the quality of the
subspace spanned by the dictionary
and, consequently, the prediction accuracy of EDMD on the subspace.
To illustrate this point, we provide an example showing that one can
choose a sequence of dictionaries spanning the same subspace that make
the residual error arbitrarily close to zero.  This motivates our goal
of identifying better measures to assess the EDMD's prediction
accuracy and its dictionary's quality.  We define the notion of the
consistency matrix and its spectral radius, which we term consistency
index, which measures the deviation of the EDMD solutions forward and
backward in time from being the inverse of each other. This is
justified by the fact that if a subspace is Koopman invariant, the
EDMD solutions applied forward and backward in time are the inverse of
each other. We characterize various algebraic properties of the
consistency index and show that it only depends on the data and the
space spanned by the dictionary, and is hence invariant under changes
of basis. We also establish that the square root of the consistency
index provides a tight upper bound on the relative root mean square
EDMD prediction error of all functions in the dictionary's~span.

\section{Preliminaries}\label{sec:preliminaries}
We briefly recall basic facts about the Koopman
operator~\cite{MB-RM-IM:12} and Extended Dynamic Mode
Decomposition~\cite{MOW-IGK-CWR:15}.

\subsubsection*{Koopman Operator}
Consider a dynamical system with state space $\Mc \subseteq \real^n$
\begin{align}\label{eq:dymamical-sys}
x^+ = T(x), \; x \in \Mc.
\end{align}
Let $\Fc$ be a vector space defined on $\cplx$ comprised of functions
from $\Mc$ to $\cplx$ whose composition with $T$ also belong to~$\Fc$.
The Koopman operator associated with the dynamics is
\begin{align}\label{eq:Koopman-def}
\Kc f = f \circ T.
\end{align}
The operator is linear. Its eigenfunctions have linear evolution on
the trajectories of the system, i.e., given eigenfunction $\phi$ with
eigenvalue $\lambda$, $\phi(x^+) = \phi \circ T(x) = \Kc \phi(x) = \lambda \phi(x)$. This results
in a significant property of the Koopman eigendecomposition: given
eigenpairs $\{(\lambda_i, \phi_i)\}_{i = 1}^{N_k}$, the evolution of
$f = \sum_{i=1}^{N_k} c_i \phi_i$ on a trajectory of the system
starting from $x_0 \in \Mc$ is
$f(x(k)) = \sum_{i=1}^{N_k} c_i \lambda_i^k \, \phi_i(x_0)$, for
$ k \in \naturals$.
This linear property is useful for both prediction and identification,
as the linearity always holds even if the system is
nonlinear. However, to completely capture the dynamics, one might need
the space $\Fc$ to be infinite dimensional.

\subsubsection*{Extended Dynamic Mode Decomposition}
The infinite-dimensional property of the Koopman operator prevents its
direct use in practical data-driven settings. This leads naturally to
constructing finite-dimensional approximations, e.g., using Extended
Dynamic Mode Decomposition (EDMD).  EDMD uses a dictionary
$D: \Mc \to \real^{1 \times N_d}$ containing $N_d$ functions,
$D(x) = [d_1(x), \ldots, d_{N_d}(x)]$.
To capture the dynamic behavior,
EDMD uses data $X,Y \in \real^{N \times n}$ containing $N$ data
snapshots gathered from system trajectories,
\begin{align}\label{eq:data-snapshots}
y_i = T(x_i), \, \forall i \in \until{N},
\end{align}
where $x_i^T$ and $y_i^T$ correspond to the $i$th rows of $X$
and~$Y$. For convenience, we define the action of $D$ on a data
matrix as $\dx = [D(x_1)^T, D(x_2)^T, \ldots, D(x_N)^T]^T$, where
$x_i^T$ is the $i$th row of $X$. EDMD approximates the action of
the Koopman operator on $\Span(D)$ by solving
\begin{align}\label{eq:EDMD-optimization}
\underset{K}{\text{minimize}} \| D(Y) - D(X) K \|_F
\end{align} 
which has the closed-form solution
\begin{align}\label{eq:EDMD-closed-form}
\Kedmd =  \EDMD{D}{X}{Y} := \dx^\dagger \dy.
\end{align}
We rely on the following basic assumption.

\begin{assumption}\longthmtitle{Full Rank Dictionary
Matrices}\label{a:D(X)-full-rank}
$\dx$ and $\dy$ have full column rank.  \oprocend
\end{assumption}

Assumption~\ref{a:D(X)-full-rank} implies that the functions in $D$
are linearly independent, i.e., they form a basis for $\Span(D)$ and
the data are diverse enough to distinguish between the elements of
$D$. Assumption~\ref{a:D(X)-full-rank} ensures that $\Kedmd$ is the
unique solution for~\eqref{eq:EDMD-optimization}.
One can use $\Kedmd$ to approximate the Koopman eigenfunctions and,
more importantly, the action of the operator on $\Span(D)$. Given
$f \in \Span(D)$ in the form of $f(\cdot) = D(\cdot)v_f$ for
$v_f \in \cplx^{N_d}$, one defines the EDMD predictor function for
$\Kc f$ as
\begin{align}\label{eq:EDMD-Koopman-pred}
\Pf_{\Kc f} (\cdot) = D(\cdot) \Kedmd v_f.
\end{align}
The predictor's quality depends on the quality of the dictionary. If
$\Span(D)$ is Koopman-invariant (i.e., $\Kc g \in \Span(D)$ for all
$g \in \Span(D)$), the predictor~\eqref{eq:EDMD-Koopman-pred} is exact
(otherwise, the prediction is inexact for some functions in the
space).

\section{Motivation and Problem
Statement}\label{sec:problem-statement}
The quality of the dictionary used for EDMD directly impacts its
accuracy. Since in general the dynamics is unknown, it is important to
use data to learn a proper dictionary tailored to the dynamics.  The
\emph{residual error} of EDMD, $\|\dy - \dx \Kedmd\|_F$, is commonly
used for this purpose as an objective function in optimization and
neural network-based learning schemes. However, it is important to
note that even though a high quality subspace (close to
Koopman-invariant) leads to small residual error, the converse is not
true: a dictionary $D$ with small residual error does not necessarily
mean that EDMD's prediction is accurate on $\Span(D)$.

\begin{example}\longthmtitle{Residual Error is not Invariant under
Linear Transformation of  Dictionary}\label{ex:motivation}
Consider the linear system $x^+ = 0.5 x$ and the vector space of
functions $\Sc = \Span\{x, x^3 - x^2\}$. To apply EDMD, we gather
$N = 1000$ data snapshots from trajectories of the system with
length of two time steps and initial conditions uniformly selected
from $[-2,2]$, and form data matrices $X,Y \in \real^{N \times
1}$. We consider a family of dictionaries parameterized by
$\alpha \in \real \setminus \{0\}$,
\begin{align}\label{eq:dictionary-alpha}
D_\alpha(x) = [x, x + \alpha (x^3 - x^2)].
\end{align}
Note that each $ D_\alpha$ is a basis for $\Sc$, and all the
dictionaries are related by nonsingular linear transformations.  We
also define two notions of prediction accuracy: the residual error
$E$ of EDMD and its normalized version $E_{\operatorname{rel}}$,  
\begin{align*}
&E (\alpha) = \| D_\alpha(Y) - D_\alpha(X) K_{\alpha}  \|_F,
&&E_{\operatorname{rel}} (\alpha) = \frac{E (\alpha)}{\| D_\alpha(Y) \|_F}, 
\end{align*}
where $K_{\alpha} = \EDMD{D_\alpha}{X}{Y}$.

\begin{figure}[htb]
\centering 
{\includegraphics[width=.48\linewidth]{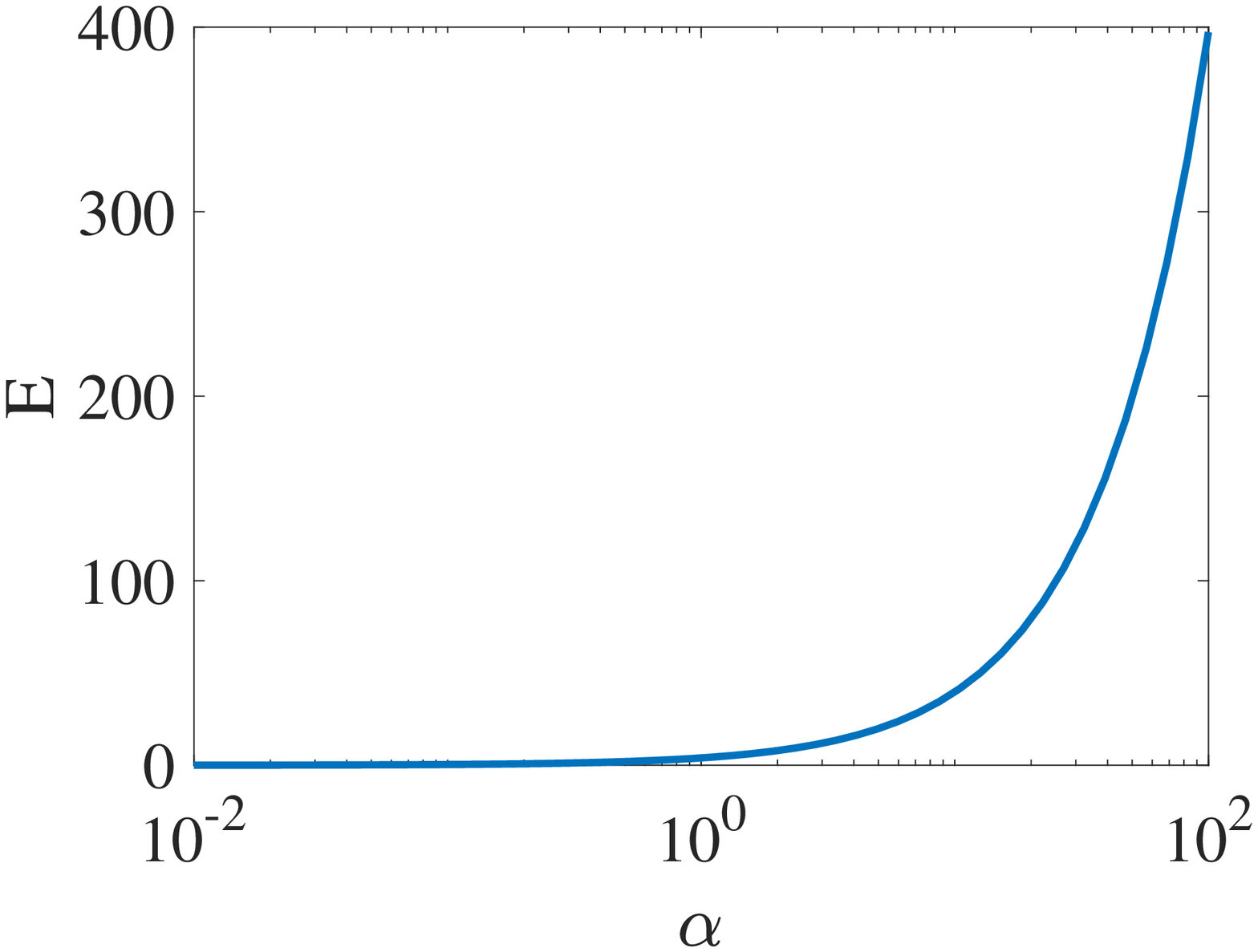}}
{\includegraphics[width=.48\linewidth]{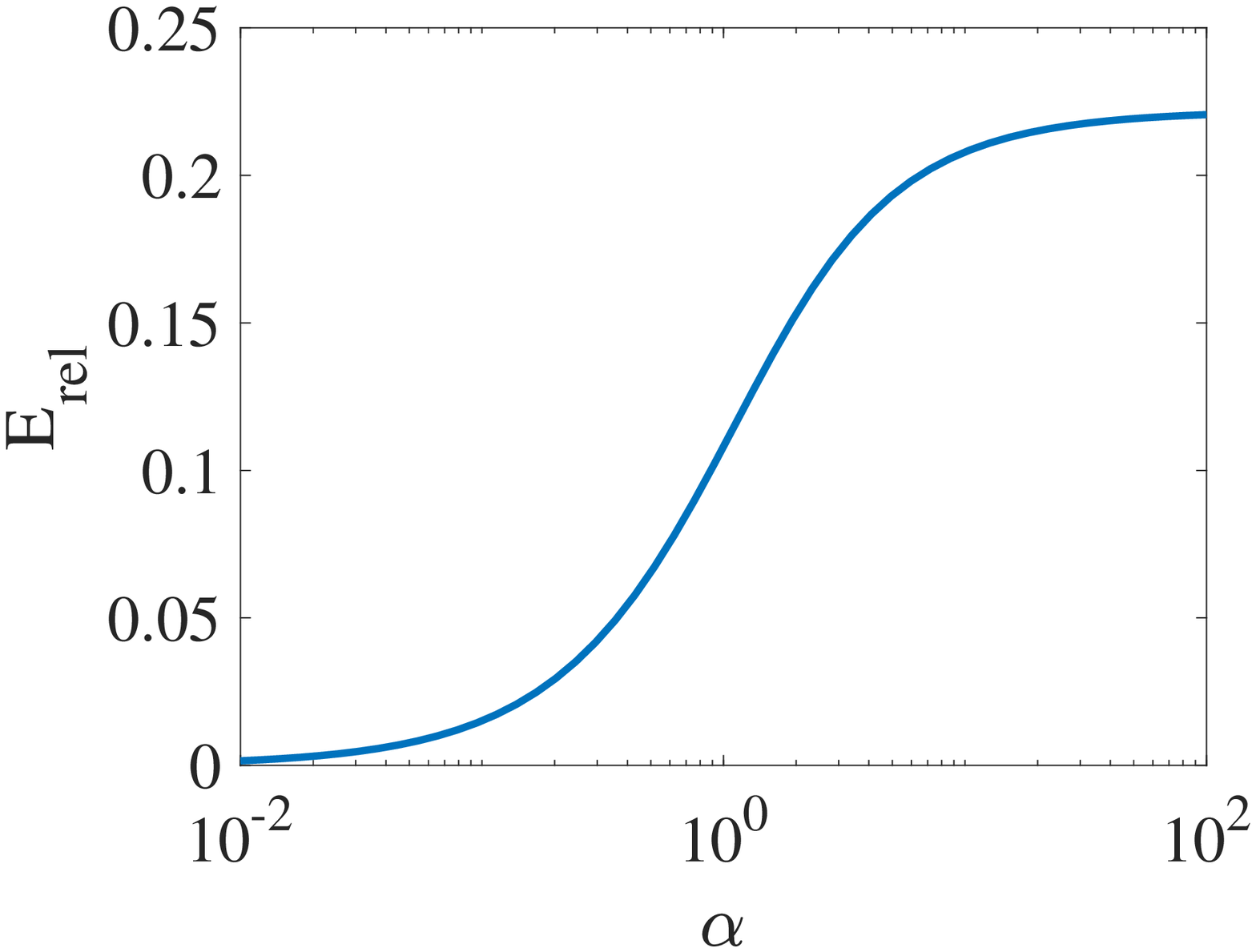}}
\caption{EDMD's residual error (left) and relative residual error
(right) as a function of $\alpha$
in~\eqref{eq:dictionary-alpha}.}\label{fig:error-vs-alpha}
\vspace*{-2ex}
\end{figure}  
Figure~\ref{fig:error-vs-alpha} shows the aforementioned notions of
error versus the value of $\alpha \in [0.01,
100]$. Figure~\ref{fig:error-vs-alpha} clearly demonstrates the
sensitivity of errors to the choice of basis for $\Sc$ despite the
invariance of $\spec(K_{\operatorname{EDMD},\alpha})$ under the choice
of basis (see e.g.,~\cite[Lemma~7.1]{MH-JC:21-auto}).  By tuning
$\alpha$, one can make both errors arbitrarily close to zero. As a
result, using the residual error as a measure to assess the quality of
the space or as an objective function in optimization or neural
network-based dictionary learning schemes can lead to erroneous
results.  \oprocend
\end{example}

\begin{remark}\longthmtitle{Prediction Accuracy of Dictionary
Elements versus Dictionary's Span}
Some applications only require short-term prediction of
\textit{finitely} many observables. In such cases, the residual
error of EDMD may be useful: the observables of choice are fixed
as elements of the dictionary and the rest of the functions
composing the dictionary are learned by minimizing the residual
error (which measures the average one time-step prediction
error). However, such methods do not necessarily lead to a
dictionary which spans an approximate Koopman-invariant subspace.
\oprocend
\end{remark}

The observations in Example~\ref{ex:motivation} prompts us to search
for a better measure of the dictionary's quality and therefore the
EDMD's prediction accuracy. We formalize this next.

\begin{problem}\longthmtitle{Characterization of EDMD's Prediction
Accuracy and the Dictionary's
Quality}\label{prob:dictionary-invariance-measure}
Given a dictionary $D$ and data matrices $X$ and $Y$, under
Assumption~\ref{a:D(X)-full-rank}, we aim to provide a data-driven measure of EDMD's accuracy and the dictionary's quality that
\begin{enumerate}
\item only depends on $\Span(D)$, $X$, and $Y$, and hence is 
invariant under the choice of basis for $\Span(D)$, i.e., given
$D'$ as an alternative basis for $\Span(D)$, the accuracy
measures calculated based on $D$ and $D'$ are~equal;
\item provides a data-driven bound on the distance between $\Kc f$
and its EDMD prediction $\Pf_{\Kc f}$ for all functions
$ f \in \Span(D)$;
\item can be computed using a closed-form formula (for
implementation in optimization solvers).  \oprocend
\end{enumerate}
\end{problem}

\section{Temporal Forward-Backward Consistency}\label{sec:consistency}
Here, we take the first step towards finding an appropriate measure
for EDMD's prediction accuracy by comparing the solutions of EDMD
forward and backward in time\footnote{The idea of looking forward and
backward in time has been considered in the literature for different
purposes, such as improving DMD to deal with noisy
data~\cite{STMD-MSH-MOW-CWR:16,OA-WY-AB:19} and identifying exact
Koopman eigenfunctions in our previous work~\cite{MH-JC:22-tac} but,
to the best of our knowledge, not for formally characterizing EDMD's
prediction accuracy.}.  Throughout the paper, we use the following
notation for forward and backward EDMD matrices
\begin{subequations}\label{eq:fwd-bck-EDMD-matrices}
\begin{align}
K_F = \EDMD{D}{X}{Y} = \dx^\dagger \dy,
\\
K_B = \EDMD{D}{Y}{X} = \dy^\dagger \dx.
\end{align}  
\end{subequations}
We rely on the observation that if the dictionary spans a
Koopman-invariant subspace, then $K_F K_B = I$. Otherwise, the forward
and backward EDMD matrices will not be the inverse of each other,
which motivates the next definition.

\begin{definition}\longthmtitle{Consistency
Matrix and Index}\label{def:consistency-matrix}
Given dictionary $D$ and data matrices $X$ and $Y$, the
\emph{consistency matrix} is $M_C(D,X,Y)= I - K_F K_B$ and the
\emph{consistency index} is
$ \mathcal{I}_C (D,X,Y) = \sprad \big(M_C(D,X,Y)\big)$.  \oprocend
\end{definition}

For convenience, we refer to $M_C(D,X,Y)$ and $\ic(D,X,Y)$ as $M_C$
and $\ic$ when the context is clear.  Next, we show that the
eigenvalues of the consistency matrix are invariant under linear
transformations of the dictionary.

\begin{proposition}\longthmtitle{Consistency Matrix's Spectrum is
Invariant under Linear Transformation of
Dictionary}\label{p:spectrum-invariance-consistency}
Let $D$ and $\tilde{D}$ be two dictionaries such that
$\tilde{D}(\cdot) = D(\cdot) R$, where $R$ is an invertible
matrix. Moreover, given data matrices $X$ and $Y$ let
Assumption~\ref{a:D(X)-full-rank} hold. Then,
\begin{enumerate}
\item $M_C(\tilde{D},X,Y) =R^{-1} M_C(D,X,Y) R$;
\item $\spec\big(M_C(D,X,Y)\big) = \spec\big(M_C(\tilde{D},X,Y)\big)$.
\end{enumerate}
\end{proposition}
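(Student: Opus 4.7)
The plan is to reduce everything to the behavior of the EDMD matrices $K_F$ and $K_B$ under a change of basis, since the consistency matrix is built directly from these. The starting point is the observation that applying the dictionary to data is $\Bbb{R}$-linear in the dictionary: since $\tilde D(\cdot) = D(\cdot) R$, we have $\tilde D(X) = D(X) R$ and $\tilde D(Y) = D(Y) R$. Under Assumption~\ref{a:D(X)-full-rank}, $D(X)$ and $D(Y)$ have full column rank and, because $R$ is invertible, so do $\tilde D(X)$ and $\tilde D(Y)$, so the forward and backward EDMD matrices exist uniquely for both dictionaries.

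The key algebraic fact I would invoke is that for a matrix $A$ with full column rank and an invertible $R$ of compatible size, $(AR)^\dagger = R^{-1} A^\dagger$. This follows from the closed-form expression $A^\dagger = (A^H A)^{-1} A^H$ applied to $AR$ and the cancellation $(R^H A^H A R)^{-1} R^H A^H = R^{-1}(A^H A)^{-1} R^{-H} R^H A^H = R^{-1} A^\dagger$. Using this identity twice, I would then compute
\begin{align*}
\tilde K_F &= \tilde D(X)^\dagger \tilde D(Y) = R^{-1} D(X)^\dagger D(Y) R = R^{-1} K_F R, \\
\tilde K_B &= \tilde D(Y)^\dagger \tilde D(X) = R^{-1} D(Y)^\dagger D(X) R = R^{-1} K_B R.
\end{align*}

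With these in hand, part (a) is a one-line assembly: the $R$ in the middle cancels, giving $\tilde K_F \tilde K_B = R^{-1} K_F K_B R$, so
\begin{align*}
M_C(\tilde D, X, Y) = I - \tilde K_F \tilde K_B = R^{-1}(I - K_F K_B)R = R^{-1} M_C(D, X, Y) R.
\end{align*}
Part (b) is then immediate, since similar matrices share the same spectrum.

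The only step that requires real care is the pseudo-inverse identity $(AR)^\dagger = R^{-1} A^\dagger$; in general $(AB)^\dagger \neq B^\dagger A^\dagger$, so I would make the full-column-rank hypothesis explicit when citing it and verify it via the normal-equation formula as sketched above. Once that is established, the rest is purely formal manipulation, and no additional hypotheses beyond Assumption~\ref{a:D(X)-full-rank} and invertibility of $R$ are needed.
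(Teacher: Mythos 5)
Your proof is correct and takes essentially the same route as the paper's: both hinge on the identity $(AR)^\dagger = R^{-1}A^\dagger$ for a full-column-rank $A$ and invertible $R$, verified through the normal-equation formula $(A^HA)^{-1}A^H$, and then conclude $\tilde K_F \tilde K_B = R^{-1}K_FK_BR$ so that part (b) follows from similarity. The only cosmetic difference is that you apply the pseudo-inverse identity to both $\tdx$ and $\tdy$, obtaining the individual similarities $\tilde K_F = R^{-1}K_FR$ and $\tilde K_B = R^{-1}K_BR$, whereas the paper applies it only to $\tdx$ and handles the middle factor via the invariance of the orthogonal projection $\tdy\tdy^\dagger = \dy\dy^\dagger$ on $\range(\dy)$; both computations are valid under Assumption~\ref{a:D(X)-full-rank}.
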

\begin{proof}
Note that part (b) directly follows from part~(a) and the fact that
similarity transformations preserve the eigenvalues. To show part
(a), define for convenience,
\begin{align*}
&K_F = \dx^\dagger \dy,  &&K_B = \dy^\dagger \dx,
\\
&\tilde{K}_F = \tdx^\dagger \tdy,  &&\tilde{K}_B = \tdy^\dagger \tdx.
\end{align*}
We start by showing that $K_F K_B$ and $\tilde{K}_F \tilde{K}_B$ are
similar. By definition, one can write
\begin{align}\label{eq:forward-backward-second-coordinate}
\tilde{K}_F \tilde{K}_B = \tdx^\dagger \tdy \tdy^\dagger \tdx.
\end{align}
Moreover, given Assumption~\ref{a:D(X)-full-rank} and the definition
of $\tilde{D}$,
\begin{align}\label{eq:psuedoinverse-coordinate-change}
&\tdx^\dagger = \big(\tdx^T \tdx \big)^{-1} \tdx^T \nonumber \\
&=\big(R^T \dx^T \dx R \big)^{-1} R^T \dx^T = R^{-1} \dx^\dagger.
\end{align}
Equations~\eqref{eq:forward-backward-second-coordinate}-\eqref{eq:psuedoinverse-coordinate-change},
in conjunction with the fact that
$\tdy \tdy^\dagger  =  \dy \dy^\dagger$
(cf. Lemma~\ref{l:invariance-projection-change-coordinates} in the
appendix), imply $\tilde{K}_F \tilde{K}_B = R^{-1} K_F K_B R$
directly leading to the required identity following
Definition~\ref{def:consistency-matrix}.
\end{proof}

According to Proposition~\ref{p:spectrum-invariance-consistency}, the
spectrum of the consistency matrix is a property of the data and the
\emph{vector space} spanned by the dictionary, as opposed to the
dictionary itself. This property is consistent with the requirement in
Problem~\ref{prob:dictionary-invariance-measure}(a). Next, we further
investigate the eigendecomposition of the consistency matrix.

\begin{lemma}\longthmtitle{Consistency Matrix's
Properties}\label{l:consistency-matrix-properties}
Given Assumption~\ref{a:D(X)-full-rank}, the consistency matrix
$M_C(D,X,Y)$ satisfies:
\begin{enumerate}
\item it is similar to a symmetric matrix;
\item it is diagonalizable with a complete set of eigenvectors;
\item $\spec\big( M_C(D,X,Y) \big) \subset [0,1]$.
\end{enumerate}
\end{lemma}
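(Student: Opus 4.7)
The plan is to reveal a hidden symmetric structure inside $M_C$ and then read off (a)--(c) from standard facts about symmetric positive semidefinite matrices. The starting observation is that, under Assumption~\ref{a:D(X)-full-rank}, $\dx^\dagger = (\dx^T \dx)^{-1} \dx^T$ and $\dy \dy^\dagger$ is the orthogonal projector $P_Y$ onto $\range(\dy)$. Consequently,
\begin{align*}
K_F K_B = \dx^\dagger \dy \dy^\dagger \dx = A^{-1} B,
\end{align*}
where $A := \dx^T \dx$ is symmetric positive definite and $B := \dx^T P_Y \dx$ is symmetric positive semidefinite (because $P_Y$ is). Moreover, $A - B = \dx^T(I - P_Y)\dx$ is also symmetric positive semidefinite, since $I - P_Y$ is an orthogonal projector.

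With this in hand, I would write $M_C = A^{-1}(A-B)$ and use the similarity by $A^{1/2}$ to obtain
\begin{align*}
M_C = A^{-1/2}\big(A^{-1/2}(A-B)A^{-1/2}\big)A^{1/2},
\end{align*}
which exhibits $M_C$ as similar to the real symmetric matrix $S := A^{-1/2}(A-B)A^{-1/2}$. This settles part (a); part (b) then follows immediately, because $S$ is orthogonally diagonalizable and similarity transfers a complete eigenbasis back to $M_C$.

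For part (c), I would observe that, on the one hand, $S \succeq 0$ since $A - B \succeq 0$, and on the other hand, equivalently $S = I - A^{-1/2} B A^{-1/2}$ with $A^{-1/2} B A^{-1/2} \succeq 0$ (since $B \succeq 0$), so $S \preceq I$. Hence every eigenvalue of $S$ lies in $[0,1]$, and by similarity $\spec(M_C) = \spec(S) \subseteq [0,1]$.

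There is no real obstacle: the only conceptual step is recognizing $K_F K_B = A^{-1} B$ with both $A$ and $A-B$ symmetric positive semidefinite. Everything else is a routine application of the $A^{1/2}$-similarity trick and the basic spectral theory of symmetric matrices.
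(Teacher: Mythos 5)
Your proof is correct. The key step---exhibiting $M_C$ as similar to a symmetric matrix via conjugation by $A^{1/2}$ with $A = \dx^T\dx$---is in substance identical to the paper's argument for parts (a) and (b): the paper achieves the same thing by choosing an invertible $R$ with $R^T \dx^T \dx R = I$ (i.e., orthonormalizing the columns of $\dx$) and invoking Proposition~\ref{p:spectrum-invariance-consistency}(a), which for $R = A^{-1/2}$ is exactly your similarity. Where you genuinely diverge is part (c): the paper works with the non-symmetric matrix $\dx^\dagger \dy \dy^\dagger \dx$ directly, takes an eigenvector, pushes it forward by $\dx$, and evaluates a Rayleigh quotient of the projector $\dy\dy^\dagger$ to conclude $\mu \in [0,1]$; you instead observe that $B = \dx^T \dy\dy^\dagger \dx$ and $A - B = \dx^T(I - \dy\dy^\dagger)\dx$ are both positive semidefinite, so the symmetric representative $S = A^{-1/2}(A-B)A^{-1/2}$ satisfies $0 \preceq S \preceq I$ in the Loewner order. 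Your route is slightly more economical in that the single decomposition $M_C = A^{-1}(A-B)$ delivers all three parts at once, whereas the paper's part (c) requires a separate eigenvector computation; the paper's version, on the other hand, avoids introducing matrix square roots and keeps the argument at the level of the original data matrices. Both hinge on the same underlying fact, namely that $\dy\dy^\dagger$ is an orthogonal projector with spectrum in $\{0,1\}$.
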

\begin{proof}
(a) Given Assumption~\ref{a:D(X)-full-rank}, there exists an
invertible matrix $R$ such that the columns of $\dx R$ are
orthonormal. Define the dictionary $\tilde{D}(\cdot) = D(\cdot)
R$. Note that $\tdx^T \tdx = I_{N_d}$ and hence
$\tdx^\dagger = \big(\tdx^T \tdx \big)^{-1} \tdx^T = \tdx^T$.
Hence,
\begin{align*}
M_C(\tilde{D}, X,Y) = I - \tdx^T \tdy \tdy^\dagger \tdx. 
\end{align*}
Noting that $\tdy \tdy^\dagger$ is symmetric, we deduce that
$M_C(\tilde{D}, X,Y)$ is symmetric. Then (a) directly follows by the
definition of $R$ and
Proposition~\ref{p:spectrum-invariance-consistency}(a).

(b) The proof directly follows from part~(a) and the fact that
symmetric matrices are diagonalizable and have a complete set of
eigenvectors.

(c) From part~(a), we deduce that $M_C(D,X,Y)$ has real
eigenvalues. Since
$ M_C(D, X,Y) = I - \dx^\dagger \dy \dy^\dagger \dx$, we only need
to show
\begin{align}\label{eq:mspec-to-prove}
\spec\big(\dx^\dagger \dy \dy^\dagger \dx \big) \subset [0,1].
\end{align}
Consider an eigenvector $v \in \real^{N_d}\setminus \{0\}$ with
eigenvalue $\mu$, i.e., $\dx^\dagger \dy \dy^\dagger \dx v = \mu v$.
Multiplying both sides from the left by $\dx$ and defining
$w = \dx v$ leads to $\dx \dx^\dagger \dy \dy^\dagger w = \mu w$.
Next, multiplying this equation from the left by $w^T$,
\begin{align}\label{eq:rayleigh-quotient-1}
w^T \dx \dx^\dagger \dy \dy^\dagger w = \mu \|w\|_2^2.
\end{align}
The fact that $\dx \dx^\dagger$ is symmetric and represents the
orthogonal projection operator on $\range(\dx)$, in conjunction with
$w \in \range(\dx)$, allows us to write $w^T \dx \dx^\dagger =
w^T$. This, combined with~\eqref{eq:rayleigh-quotient-1}, yields
$\mu = \frac{w^T \dy \dy^\dagger w}{\|w\|_2^2}$.  Hence,
$\lambda_{\min}(\dy \dy^\dagger) \leq \mu \leq \lambda_{\max}(\dy
\dy^\dagger)$.
However, since $\dy \dy^\dagger$ is an orthogonal projection
operator, we have $\spec(\dy \dy^\dagger) \subset
[0,1]$. Hence, $\mu \in[0,1]$, leading
to~\eqref{eq:mspec-to-prove}, concluding the proof.
\end{proof}

From Lemma~\ref{l:consistency-matrix-properties}, the
consistency matrix is similar to a positive semidefinite matrix.  The
larger the eigenvalues of $M_C$, the more inconsistent the forward and
backward EDMD models get.  Also, from Lemma~\ref{l:consistency-matrix-properties},
$\ic = \lambda_{\max}(M_C) \in [0,1]$. Intuitively, the consistency
index determines the quality of the subspace spanned by the dictionary
and the prediction accuracy of EDMD on it. This is formalized next.

\section{Consistency Index Determines EDMD's Prediction Accuracy on
Data}\label{sec:EDMD-Prediction-Accuracy}

Our main result states that the square root of the consistency index
is a tight upper bound for the relative root mean square prediction
error of EDMD.

\begin{theorem}\longthmtitle{$\sqrt{\ic}$ Bounds the Relative Root
Mean Square Error (RRMSE) of
EDMD}\label{t:RRMSE-bound-sprad-consistency}
For dictionary $D$ and data matrices $X,Y$, under
Assumption~\ref{a:D(X)-full-rank},
\begin{align*}
\rrmsemax
&:= \max_{f \in \Span(D)} \frac{\sqrt {\frac{1}{N}
\sum_{i=1}^N | \Kc f (x_i) - \Pf_{\Kc f}(x_i)|^2 } }{ \sqrt{\frac{1}{N} \sum_{i=1}^N | \Kc f (x_i)|^2 } } 
\nonumber
\\
&=  \sqrt{\ic(D,X,Y)}.
\end{align*}
where $x_i$ is the $i$th row of $X$ and  $\Pf_{\Kc f}$ is defined
in~\eqref{eq:EDMD-Koopman-pred}.
\end{theorem}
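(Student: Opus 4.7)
My plan is to rewrite the RRMSE as a generalized Rayleigh quotient and then reduce that quotient to the spectral radius of $M_C$.

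First, I would parametrize each $f \in \Span(D)$ as $f(\cdot) = D(\cdot) v_f$ with $v_f \in \cplx^{N_d}$. Then the vectors of evaluations satisfy
\begin{align*}
\big[\Kc f(x_i)\big]_{i=1}^N &= \dy\, v_f, \\
\big[\Pf_{\Kc f}(x_i)\big]_{i=1}^N &= \dx K_F v_f = \dx \dx^\dagger \dy\, v_f = P_X \dy\, v_f,
\end{align*}
where $P_X = \dx \dx^\dagger$ is the orthogonal projector onto $\range(\dx)$. Thus the prediction residual at the data is $(I - P_X)\dy\, v_f$, and
\begin{align*}
\rrmse(f)^2 = \frac{v_f^H\, \dy^T (I - P_X)\, \dy\, v_f}{v_f^H\, \dy^T \dy\, v_f}.
\end{align*}

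Next, I would recognize this as a generalized Rayleigh quotient with symmetric matrices $B = \dy^T(I-P_X)\dy$ and $A = \dy^T \dy$. Under Assumption~\ref{a:D(X)-full-rank}, $A$ is positive definite, so $\rrmsemax^2$ equals the largest generalized eigenvalue of $(B,A)$, which in turn equals $\lambda_{\max}(A^{-1}B)$, attained on some (real) generalized eigenvector. This gives simultaneously the equality and its tightness.

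The remaining step, and the one carrying the substance of the result, is to identify $A^{-1}B$ with (a matrix cospectral to) $M_C$. Using $A^{-1}\dy^T = (\dy^T \dy)^{-1} \dy^T = \dy^\dagger$, a direct calculation gives
\begin{align*}
A^{-1} B = \dy^\dagger (I - P_X) \dy = I - \dy^\dagger \dx \dx^\dagger \dy = I - K_B K_F .
\end{align*}
Since $K_F$ and $K_B$ are square matrices of the same size, the products $K_F K_B$ and $K_B K_F$ share the same characteristic polynomial, hence $\spec(I - K_B K_F) = \spec(I - K_F K_B) = \spec(M_C)$. By Lemma~\ref{l:consistency-matrix-properties}(c), $\spec(M_C) \subset [0,1]$, so all these eigenvalues are nonnegative and
\begin{align*}
\rrmsemax^2 = \lambda_{\max}(A^{-1}B) = \sprad(I - K_B K_F) = \sprad(M_C) = \ic(D,X,Y).
\end{align*}
Taking square roots gives the claim.

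\textbf{Anticipated obstacle.} The only delicate point is the identification $A^{-1}B = I - K_B K_F$ and then bridging from $K_B K_F$ back to $K_F K_B$, which is what links the bound to the already-established spectral properties of $M_C$. Everything else is bookkeeping: writing residuals in matrix form, invoking that $\dy$ has full column rank to use $(\dy^T\dy)^{-1}\dy^T = \dy^\dagger$, and appealing to Lemma~\ref{l:consistency-matrix-properties}(c) for nonnegativity so that the largest eigenvalue coincides with the spectral radius.
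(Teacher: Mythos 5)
Your proof is correct, but it takes a genuinely different route from the paper's. The paper first establishes the intermediate result of Theorem~\ref{t:spectral-radius-consistency-diff-proj}, namely $\sqrt{\ic} = \sprad(\dy\dy^\dagger - \dx\dx^\dagger)$, via the Anderson--Harner--Trapp results on eigenvalues of products and differences of orthogonal projections (with a three-case analysis on $\lambda_{\max}(M_C)$); it then bounds $\rrmse_f$ by the induced $2$-norm of the symmetric matrix $\dy\dy^\dagger - \dx\dx^\dagger$ and proves tightness through another three-case construction of extremal functions. You instead write $\rrmsemax^2$ as a generalized Rayleigh quotient for the pencil $(B,A)$ with $B = \dy^T(I - \dx\dx^\dagger)\dy$ and $A = \dy^T\dy \succ 0$, identify $A^{-1}B = I - K_B K_F$, and pass to $M_C = I - K_F K_B$ by cospectrality of $K_B K_F$ and $K_F K_B$ --- legitimate here since both are square $N_d \times N_d$ matrices, so the full spectra (not just the nonzero parts, as in Lemma~\ref{l:invariance-spectrum-change-order}) agree. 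All your steps check out: the residual vector at the data is indeed $(I-\dx\dx^\dagger)\dy v_f$, full column rank of $\dy$ gives $(\dy^T\dy)^{-1}\dy^T = \dy^\dagger$ and $\dy^\dagger\dy = I$, and the maximum of the quotient over complex $v_f$ is attained at a real generalized eigenvector, which delivers the equality (tightness) for free with no case analysis. You do not even strictly need Lemma~\ref{l:consistency-matrix-properties}(c): $A^{-1}B$ is similar to $A^{-1/2}BA^{-1/2}$, which is positive semidefinite because $B$ is, so its spectrum is automatically real and nonnegative and $\lambda_{\max}$ coincides with the spectral radius. The trade-off is that your argument bypasses the projection-difference characterization entirely: the paper's longer route earns Theorem~\ref{t:spectral-radius-consistency-diff-proj} as a standalone geometric statement (the Grassmannian/principal-angle interpretation discussed in the subsequent remark), whereas yours proves Theorem~\ref{t:RRMSE-bound-sprad-consistency} more economically but yields nothing about $\sprad(\dy\dy^\dagger - \dx\dx^\dagger)$.
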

\smallskip

Note that the combination of Definition~\ref{def:consistency-matrix}, Lemma~\ref{l:consistency-matrix-properties}, and
Theorem~\ref{t:RRMSE-bound-sprad-consistency} mean that
$\sqrt{\ic(D,X,Y)}$ satisfies all the requirements in
Problem~\ref{prob:dictionary-invariance-measure}\footnote{In
fact,
if one were to plot it as a function of
$\alpha \in \real \setminus \{0\}$ in Example~\ref{ex:motivation},
one would obtain a constant value (unlike the residual error
plotted in Figure~\ref{fig:error-vs-alpha}), showing it correctly
encodes the quality of the vector space.}.  Before proving the
result, we first remark its importance regarding function predictions.

\begin{remark}\longthmtitle{$\sqrt{\ic}$ Determines the Relative
$L_2$-norm Error of EDMD's Prediction under Empirical Measure}
Given that the elements of $\Span(D)$ and their composition with~$T$
are measurable and considering the empirical measure
$\mu_X = \frac{1}{N} \sum_{i=1}^{N} \delta_{x_i}$, where
$\delta_{x_i}$ is the Dirac measure defined based on the $i$th row
of $X$, one can rewrite $\rrmsemax$ as
\begin{align*}
\rrmsemax \! =\max_{f \in \Span(D)} \frac{\|\Kc f - \Pf_{\Kc f}
\|_{L_2(\mu_X)}}{{\| \Kc f \|_{L_2(\mu_X)}}} = \sqrt{\ic}. 
\eqoprocend
\end{align*}
\end{remark}
\smallskip

To prove Theorem~\ref{t:RRMSE-bound-sprad-consistency}, we first
provide the following alternative expression of the consistency index.

\begin{theorem}\longthmtitle{Consistency Index and Difference of
Projections}\label{t:spectral-radius-consistency-diff-proj}
Given Assumption~\ref{a:D(X)-full-rank},
\begin{align*}
\sqrt{\ic(D,X,Y)} = \sprad \big(\dy \dy^\dagger  - \dx \dx^\dagger \big).
\end{align*}
\end{theorem}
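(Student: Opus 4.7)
The plan is to reduce the identity to a computation involving only the orthogonal projections $P_X := D(X) D(X)^\dagger$ and $P_Y := D(Y) D(Y)^\dagger$, and then to match their joint spectral structure via a canonical pair decomposition. Both sides depend only on $\range(D(X))$ and $\range(D(Y))$: the right-hand side directly through $P_X$ and $P_Y$, and the left-hand side by Proposition~\ref{p:spectrum-invariance-consistency}. Thus I would choose an invertible $R$ so that $\tilde D(\cdot) = D(\cdot) R$ satisfies $\tilde D(X)^T \tilde D(X) = I_{N_d}$ (the same reduction used in Lemma~\ref{l:consistency-matrix-properties}(a)). In this basis, $\tilde D(X)^\dagger = \tilde D(X)^T$, $\tilde D(X) \tilde D(X)^T = P_X$, and the consistency matrix becomes $M_C = I_{N_d} - \tilde D(X)^T P_Y \tilde D(X)$.

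Next, since $v \mapsto \tilde D(X) v$ is an isometry from $\real^{N_d}$ onto $\range(P_X)$, it conjugates $\tilde D(X)^T P_Y \tilde D(X)$ to the restriction of $P_X P_Y P_X$ to $\range(P_X)$. Denoting the eigenvalues of this restriction by $\sigma_1^2, \ldots, \sigma_{N_d}^2 \in [0,1]$, this yields $\spec(M_C) = \{1 - \sigma_i^2\}_{i=1}^{N_d}$ and hence $\ic = 1 - \min_i \sigma_i^2$.

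For the right-hand side, I would carry out a canonical pair decomposition of $(P_X, P_Y)$. Taking the SVD $\tilde D(X)^T U_Y = F \Sigma G^T$ for any orthonormal basis $U_Y$ of $\range(P_Y)$ and rotating both bases produces paired orthonormal systems $\{u_i\} \subset \range(P_X)$ and $\{v_i\} \subset \range(P_Y)$ with $u_i^T v_j = \sigma_i \delta_{ij}$. The ambient space $\real^N$ then splits as an orthogonal direct sum of (at most) $N_d$ planes $\Span(u_i, v_i)$ invariant under both $P_X$ and $P_Y$, plus the orthogonal complement of $\range(P_X) + \range(P_Y)$ on which $P_Y - P_X$ vanishes. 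On each two-dimensional plane, a direct $2 \times 2$ computation shows that $P_Y - P_X$ has trace $0$ and determinant $-(1 - \sigma_i^2)$, yielding eigenvalues $\pm \sqrt{1 - \sigma_i^2}$. Taking the maximum over $i$ gives $\sprad(P_Y - P_X) = \sqrt{1 - \min_i \sigma_i^2} = \sqrt{\ic}$.

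The main obstacle is verifying the pair decomposition: one must carefully check mutual orthogonality of the paired systems and handle the degenerate cases $\sigma_i = 1$ (where $u_i = v_i$, so the plane collapses to a line on which $P_Y - P_X$ vanishes, consistent with $\sqrt{1 - 1} = 0$) and $\sigma_i = 0$ (where the plane remains two-dimensional and contributes eigenvalues $\pm 1$, consistent with $\sqrt{1 - 0} = 1$). Once these bookkeeping details are settled, the $2 \times 2$ eigenvalue computation and the final maximum over $i$ are routine.
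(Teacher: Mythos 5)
Your argument is correct, but it follows a genuinely different route from the paper's proof. The paper reduces the claim to the spectra of $K_F K_B$ and $\Pc_{\dx}\Pc_{\dy}$ via $\specnz(AB)=\specnz(BA)$ and then invokes the cited results of Anderson, Harner, and Trapp relating $\specnz(\Pc_{\dx}\Pc_{\dy})$ to $\specnz(\Pc_{\dy}-\Pc_{\dx})$; since those results only cover eigenvalues strictly between $0$ and $1$, the paper must treat the cases $\ic=0$, $\ic=1$, and $\ic\in(0,1)$ separately, and in the last case run an additional contradiction argument to rule out $\sprad(\Pc_{\dy}-\Pc_{\dx})=1$. You instead construct the two-subspace (CS/principal-angle) decomposition of $\real^N$ into mutually orthogonal planes invariant under both projections and read off the full spectrum of $\Pc_{\dy}-\Pc_{\dx}$ as the union of $\{\pm\sqrt{1-\sigma_i^2}\}$ with zeros, while the isometry $v\mapsto \tdx v$ identifies $\spec(M_C)$ with $\{1-\sigma_i^2\}_{i=1}^{N_d}$. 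In effect you give a self-contained proof of the facts the paper outsources to the reference; this handles the endpoints $\sigma_i\in\{0,1\}$ uniformly rather than as separate cases and yields the stronger multiset relation between the two spectra, at the cost of verifying the pair decomposition. That verification does go through as you outline: with $u_i$ the columns of $\tdx F$ and $v_j$ the columns of $U_Y G$ from the SVD $\tdx^T U_Y = F\Sigma G^T$, one gets $u_i^T v_j=\sigma_i\delta_{ij}$, $\Pc_{\dx} v_i=\sigma_i u_i$, and $\Pc_{\dy} u_i=\sigma_i v_i$, so the planes are mutually orthogonal and invariant, and the dimension count $k+2(N_d-k)=\dim(\range(\dx)+\range(\dy))$ (with $k$ the number of $\sigma_i=1$) confirms the decomposition is exhaustive. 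Your $2\times 2$ trace/determinant computation and the final maximization are then correct, so the proposal is sound.
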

\smallskip
\begin{proof}
From Lemma~\ref{l:consistency-matrix-properties}, we have
$\ic = \lambda_{\max}(M_C)$.  We use the following notation
throughout the proof,
\begin{align*}
\lambda_{\max} =  \ic, \, \Pc_{\dx} =  \dx \dx^\dagger, \,
\Pc_{\dy} = \dy \dy^\dagger. 
\end{align*}
Note that $\Pc_{\dx}$ and $\Pc_{\dy}$ are projection operators on
$\range(\dx)$ and $\range(\dy)$, resp.  By
Definition~\ref{def:consistency-matrix}, given an eigenvalue
$\lambda \in [0,1]$ of $M_C$ with eigenvector $v \neq 0$,
\begin{align}\label{eq:consistency-fbedmd-eigs}
M_C v = \lambda v \Leftrightarrow K_F K_B v = (1-\lambda) v.
\end{align}
We consider the cases (i) $\lambda_{\max} = 0$, (ii)
$\lambda_{\max} = 1$, and (iii) $\lambda_{\max} \in (0,1)$
separately.

\textbf{Case (i): $\lambda_{\max} =0$.} In this case, from Lemma~\ref{l:consistency-matrix-properties}, we deduce $M_C = 0$.
Consequently, $K_F K_B = I$. By multiplying both sides from the left
by $\dx$ and collecting the terms, we have
$\Pc_{\dx} \Pc_{\dy} \dx = \dx$. Hence, one can write
\begin{align}\label{eq:product-proj-eigenfunctions}
\Pc_{\dx} \Pc_{\dy} z = z, \; \forall z \in \range(\dx).
\end{align}
Based on~\cite[summary table in p. 298]{WNA-EJH-GET:85}, we
deduce
\begin{align}\label{eq:product-proj-identity}
\Pc_{\dx} \Pc_{\dy} w = w \Leftrightarrow w \in \range(\dx) \cap \range(\dy).
\end{align}
Using~\eqref{eq:product-proj-eigenfunctions}-\eqref{eq:product-proj-identity},
one can write $\range(\dx) \subseteq \range(\dx) \cap \range(\dy)$ and
consequently $ \range(\dx) \subseteq \range(\dy)$.  By a similar
argument as above and swapping $K_F$ with $K_B$ and $\dx$ with $\dy$,
one can also deduce $ \range(\dy) \subseteq \range(\dx)$.  Hence,
$\range(\dx) = \range(\dy)$. Moreover, since the orthogonal projection
on a subspace is unique, we have $\Pc_{\dy}- \Pc_{\dx} = 0$, concluding
the proof for this part.

\textbf{Case (ii): $\lambda_{\max} = 1$}. By setting
$\lambda = \lambda_{\max}$ in~\eqref{eq:consistency-fbedmd-eigs},
multiplying both sides from the left by $\dx$, defining
$w := \dx v$,
we have
\begin{align}\label{eq:proj-prod-zero}
\Pc_{\dx} \Pc_{\dy} w = 0. 
\end{align}
Hence, noting that $w \neq 0$ (based on
Assumption~\ref{a:D(X)-full-rank} and the fact that $v \neq 0$), we
can deduce it is an eigenvector of $\Pc_{\dx} \Pc_{\dy}$ with
eigenvalue $0$. We show next that
$w \in \range(\dx) \cap \range(\dy)^\perp$. One can
write $\range(\dx)$ as the direct sum of the orthogonal subspaces
$\range(\dx) \cap \range(\dy)$ and
$\range(\dx) \cap \range(\dy)^\perp$. Hence, we uniquely
decompose $w \in \range(\dx)$ as $w = w_{\dy} + w_{\dy ^\perp}$,
where $w_{\dy} \in \range(\dx) \cap \range(\dy)$ and
$w_{\dy ^\perp} \in \range(\dx) \cap \range(\dy)^\perp$. Noting
that $\Pc_{\dy} w_{\dy ^\perp} = 0$ and $\Pc_{\dy} w_{\dy} = w_{\dy}$,
we get from~\eqref{eq:proj-prod-zero} that
$\Pc_{\dx} \Pc_{\dy} w = \Pc_{\dx} w_{\dy} =0$.  Since
$w_{\dy} \in \range(\dx)$, we deduce $w_{\dy} =0$ and consequently,
\begin{align*}
w = w_{\dy ^\perp} \in \range(\dx) \cap \range(\dy)^\perp.
\end{align*}
Therefore, $(\Pc_{\dy} - \Pc_{\dx}) w = - w$ and, given that
$w \neq 0$, we deduce that $\Pc_{\dy} - \Pc_{\dx}$ has an eigenvalue
equal to $-1$. Since $\spec(\Pc_{\dy} - \Pc_{\dx}) \subset [-1,1]$,
cf.~\cite[Lemma~1]{WNA-EJH-GET:85}, we conclude
$\sprad(\Pc_{\dy} - \Pc_{\dx}) = 1$.
The proof concludes by noting that $\ic = \lambda_{\max} =1$.

\textbf{Case (iii): $\lambda_{\max} \in (0,1)$.} Using
Lemma~\ref{l:invariance-spectrum-change-order} and the closed-form
expressions of $K_F$, $K_B$, $\Pc_{\dx}$, and $\Pc_{\dy}$,
\begin{align}\label{eq:fbedmd-projection-product-eigs}
\specnz(K_F K_B) = \specnz(\Pc_{\dx} \Pc_{\dy}).
\end{align}
Given $\mu \in (0,1)$, from~\cite[Theorems~1-2]{WNA-EJH-GET:85},
$\mu \in \specnz(\Pc_{\dx} \Pc_{\dy})$ if and only if
$\{\pm \sqrt{1-\mu}\} \subset \specnz\big(\Pc_{\dy} -
\Pc_{\dx}\big)$. Setting $\mu = 1- \lambda$, $\lambda \in (0,1)$,
one can use this  in conjunction
with~\eqref{eq:consistency-fbedmd-eigs}
and~\eqref{eq:fbedmd-projection-product-eigs} to write
\begin{align*}
\lambda \in \specnz (M_C) \Leftrightarrow \{\pm \sqrt{\lambda}\}
\subset \specnz\big(\Pc_{\dy} - \Pc_{\dx}\big) .
\end{align*}
This, in conjunction with~\cite[Theorem~1]{WNA-EJH-GET:85} and the
fact that $\sprad\big(\Pc_{\dy} - \Pc_{\dx}\big) \leq 1$
(cf.~\cite[Lemma~1]{WNA-EJH-GET:85}), shows that if
$\sprad\big(\Pc_{\dy} - \Pc_{\dx}\big) <1$, then the result holds.  To
conclude the proof, we need to show that
$\sprad\big(\Pc_{\dy} - \Pc_{\dx}\big) =1$ is not true. By
contradiction, suppose this is the case, then at least one of the
following holds:
\begin{enumerate}
\item
$ \exists w_1\in \real^{N} \setminus \{0\}; \; \big(\Pc_{\dy} -
\Pc_{\dx}\big) w_1 = -w_1$, \label{case:diff-proj-eig-minus-1}
\item
$\exists w_2 \in \real^{N} \setminus \{0\}; \; \big(\Pc_{\dy} -
\Pc_{\dx}\big) w_2 = w_2$. \label{case:diff-proj-eig-plus-1}
\end{enumerate}
For case~\ref{case:diff-proj-eig-minus-1}, based on~\cite[summary
table in p. 298]{WNA-EJH-GET:85},
\begin{align*}
w_1 \in \range(\dx) \cap \range(\dy)^\perp.
\end{align*}
Now, consider the vector $p_1 \neq 0$ with $w_1 = \dx
p_1$. Consequently, one can write
$K_F K_B p_1 = \dx^\dagger \Pc_{\dy} w_1 = 0$,
where in the last equality we have used $w_1 \perp
\range(\dy)$. However, this implies that $M_C p_1 = p_1$,
contradicting the fact that $\lambda_{\max} \in (0,1)$.

For case~\ref{case:diff-proj-eig-plus-1}, note that
$w_2 \in \range(\dy) \cap \range(\dx)^\perp$ (see
e.g.,~\cite[summary table in p. 298]{WNA-EJH-GET:85}).
Consider the vector space
$\Sc = \range(\dy) \cap \range(w_2)^\perp$. Clearly
$\dim \Sc < \dim \range(\dy) = \dim \range(\dx)$.
Consequently, there exists a non-zero vector $w^* \in \range(\dx)$
such that $w^* \perp \Sc$. Also, $w^* \perp \range(w_2)$ since
$w_2 \perp \range(\dx)$. Hence, by noting that $\range(\dy)$ is the
direct sum of $\Sc$ and $\range(w_2)$, one can conclude
$w^* \in \range(\dx) \cap \range(\dy)^\perp$
and, as a result, we
have $\big( \Pc_{\dy} - \Pc_{\dx} \big) w^* = -w^*$.
Since $w^*$ satisfies the identity in
case~\ref{case:diff-proj-eig-minus-1}, the proof follows by
replacing $w_1$ with $w^*$ in the proof of
case~\ref{case:diff-proj-eig-minus-1}.
\end{proof}

\begin{remark}\longthmtitle{Geometric Connections to Grassmannians}
Theorem~\ref{t:spectral-radius-consistency-diff-proj} establishes
a link between the consistency index and the spectral radius of
the difference of projection matrices. Given proper confinement of
subspaces with fixed dimension to a Grassmannian, see
e.g.~\cite{PAA-RM-RS:09}, the consistency index can be viewed as a
metric measuring the distance (by encoding angles) between vector
spaces of fixed dimension.  Similar ideas based on the difference
of projections have been used in the context of dynamic mode
decomposition~\cite{AK-TTG:21}.  Even if the dimension of the
vector spaces is not fixed, the consistency index and difference
of projections still can be used through results similar to
Theorem~\ref{t:RRMSE-bound-sprad-consistency}. This is especially
relevant if the dimension of the Koopman-invariant subspace is
unknown, see e.g.~\cite{MH-JC:21-auto}.  \oprocend
\end{remark}

We are finally ready to prove
Theorem~\ref{t:RRMSE-bound-sprad-consistency}.

\begin{proof}\longthmtitle{Theorem~\ref{t:RRMSE-bound-sprad-consistency}}
We use the following notation throughout the proof:
$ \Pc_{\dx}=\dx \dx^\dagger$, $\Pc_{\dy}= \dy\dy^\dagger$, and
$s = \sprad(\Pc_{\dy} - \Pc_{\dx})$.  Note that, from
Theorem~\ref{t:spectral-radius-consistency-diff-proj},
$s = \sqrt{\ic(D,X,Y)}$.  Given an arbitrary function
$f(\cdot) = D(\cdot) v_f \in \Span(D)$, with $v_f \in \cplx^{N_d}$,
one can use~\eqref{eq:Koopman-def}, the
predictor~\eqref{eq:EDMD-Koopman-pred} with
$K_F = \Kedmd= \EDMD{D}{X}{Y}$, and the relationship between the
rows of $X$ and $Y$ in~\eqref{eq:data-snapshots} to write
\begin{align}\label{eq:RRMSE-simplification}
\rrmse_f
&:= \frac{\sqrt { \sum_{i=1}^N |  D(y_i)v_f - D(x_i) K_F v_f|^2
} }{ \sqrt { \sum_{i=1}^N |  D(y_i) v_f|^2 }}  
\nonumber \\
&= \frac{\|\dy v_f - \dx K_F v_f\|_2}{\|\dy v_f \|_2}
\end{align}
Noting that $\dy = \Pc_{\dy} \dy$, one can write
\begin{align}\label{eq:upper-bound-sprad}
\|\dy v_f - \dx K_F v_f\|_2 
&= \|  (\Pc_{\dy}- \Pc_{\dx} ) \dy v_f\|_2 
\nonumber \\
&\leq s \|\dy v_f\|_2 ,
\end{align}
where the last inequality holds since the matrix
$\Pc_{\dy}- \Pc_{\dx}$ is symmetric and therefore its spectral
radius is equal to its induced 2-norm. Based
on~\eqref{eq:RRMSE-simplification}-\eqref{eq:upper-bound-sprad},
we have $\rrmse_f \leq s$. Hence, by definition of $\rrmsemax$ in
the statement of the result, we have
\begin{align}\label{eq:rrmsemax-inequality}
\rrmsemax = \max_{f \in \Span(D)}  \rrmse_f \leq s 
\end{align}
Now, we prove that the equality in~\eqref{eq:rrmsemax-inequality}
holds.  We consider three cases: (i) $s =0$ (ii) $s =1$ or (iii)
$s \in (0,1)$.

\textbf{Case~(i):} Since $\rrmsemax \geq 0$ by definition, in this
case $\rrmsemax = 0$ follows directly.

\textbf{Case~(ii):} In this case, there exists a
nonzero vector\footnote{The argument for the existence of $p^*$ is similar
(by swapping $\dx$ and $\dy$) to the argument used for the
existence of vectors $w_1$ and $w^*$ in the proof of
Theorem~\ref{t:spectral-radius-consistency-diff-proj}~(Case~(iii)). We
omit it for space reasons.}
$p^* \in \range(\dy) \cap \range(\dx)^\perp$. Let $v^*$ be such
that $p^* = \dy v^*$. Using~\eqref{eq:upper-bound-sprad} for $v^*$
instead of $v_f$, and the properties of $p^*$, one can write
$\| \dy v^* - \dx K_F v^*\|_2 = \|\Pc_{\dy} \dy v^*\|_2 = \|\dy
v^*\|_2$. Hence, for the function
$f^*(\cdot) = D(\cdot)v^* \in \Span(D)$, one can
use~\eqref{eq:RRMSE-simplification} to see that
$\rrmse_{f^*} = 1 =s$. Hence, equality holds 
in~\eqref{eq:rrmsemax-inequality}.

\textbf{Case~(iii):} In this case $s \in (0,1)$ and based
on~\cite[Theorem~1]{WNA-EJH-GET:85}, the matrix
$\Pc_{\dy} - \Pc_{\dx}$ has two eigenvalues $\pm s$ with
corresponding orthogonal eigenvectors
$v_{+s}, v_{-s} \in \real^{N_d}$. Moreover, based
on~\cite[Theorem~1(a)]{WNA-EJH-GET:85},
$\Pc_{\dy} v_{+s} \in \Span(v_{+s}, v_{-s})$. Hence, for some
$\alpha, \beta \in \real$, we have
\begin{align*}
q^* := \Pc_{\dy} v_{+s} = \alpha v_{+s} + \beta v_{-s}.
\end{align*}
Let $r^*$ be such that $q^* = \dy r^*$. Now, based on the first part
of~\eqref{eq:upper-bound-sprad} for $r^*$ instead of $v_f$, we have
\begin{align}\label{eq:spec-bound-pred-error}
&\| \dy r^* - \dx K_F r^* \|_2 = \| (\Pc_{\dy}-\Pc_{\dx}) \dy
r^*\|_2  
\nonumber \\
&=  \| (\Pc_{\dy}-\Pc_{\dx}) (\alpha v_{+s} + \beta v_{-s})\|_2 = s
\| \alpha v_{+s} - \beta v_{-s}\|_2 
\nonumber \\
& = s \| \alpha v_{+s} + \beta v_{-s}\|_2 =  s \| \dy r^* \|_2,
\end{align}
where
in the third and fourth equalities we have used the definition of
$v_{+s}$ and $v_{-s}$ and their orthogonality. Now, for the
function $g^*(\cdot) = D(\cdot)r^* \in \Span(D)$, one can
use~\eqref{eq:RRMSE-simplification} to see that
$\rrmse_{g^*} = s$. Hence, the equality
in~\eqref{eq:rrmsemax-inequality} holds, and this concludes the
proof. 
\end{proof}

\begin{remark}\longthmtitle{Working with Consistency Matrix is More
Efficient than the Difference of Projections}
According to Theorems~\ref{t:RRMSE-bound-sprad-consistency}
and~\ref{t:spectral-radius-consistency-diff-proj}, one can use the
consistency matrix $M_C \in \real^{N_d \times N_d}$ or the
difference of projections
$\dy \dy^\dagger - \dx \dx^\dagger \in \real^{N \times N}$
interchangeably to compute the relative root mean square
error. However, note that the size of the consistency matrix depends
on the dictionary $N_d$, while the size of the difference of
projections depends on the size of data $N$. In most practical
settings $N \gg N_d$, and consequently, working with the consistency
matrix is more efficient. In fact, given moderate to large data
sets, even saving the difference of projections matrix in the memory
may be infeasible. The calculation of the consistency matrix
requires solving two least-squares problems, which can be done
recursively for large data sets.  \oprocend
\end{remark}

\begin{remark}\longthmtitle{Efficient Computation of the Consistency Index}
The consistency index is defined as the spectral radius of the
consistency matrix and can be computed as such.  One can also use
the following to compute it more efficiently: (i) the consistency
index is the maximum eigenvalue of a matrix with nonnegative real
eigenvalues
(cf. Lemma~\ref{l:consistency-matrix-properties}(c));
(ii) given an appropriate change of coordinates making
$\dx^T \dx = I$ (see proof of Lemma~\ref{l:consistency-matrix-properties}(a)), the
consistency matrix becomes positive semi-definite. Hence, in
optimization-based dictionary learning, one can add a constraint
$\dx^T \dx = I$ and minimize the 2-norm of the consistency matrix
(which equals the maximum eigenvalue for positive semi-definite
matrices).  \oprocend
\end{remark}

\section{Conclusions}
We have introduced the concept of consistency index, a data-driven
measure that quantifies the accuracy of the EDMD method on a
finite-dimensional functional space generated by a dictionary of
functions.  The consistency index is invariant under the choice of
basis of the functional space, is computable in closed form, and
corresponds to the relative root mean squared error.  Future work will
build on the measure introduced here to design algebraic algorithms
that find dictionaries with EDMD predictions of a predetermined level
of accuracy and use the consistency index as an objective for
optimization and neural network-based methods to identify dictionaries
including the system state that span spaces that are close to being
Koopman-invariant.

\setcounter{section}{0} \renewcommand{\thesection}{Appendix
\Alph{section}}
\section{Basic Algebraic Results}\label{app1}
\renewcommand{\thesection}{\Alph{section}} 
Here, we recall two results that are used in the proofs.

\begin{lemma}\label{l:invariance-projection-change-coordinates}
Let $B_1, B_2 \in \real^{m \times n}$ be matrices such that
$\range(B_1) = \range(B_2)$. Then
$B_1 B_1^\dagger = B_2 B_2^\dagger$.  \oprocend
\end{lemma}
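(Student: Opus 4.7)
The plan is to exploit the well-known fact that for any matrix $B$, the product $B B^\dagger$ coincides with the orthogonal projection onto $\range(B)$, and then invoke uniqueness of orthogonal projections onto a fixed subspace. Concretely, I would first recall the four Moore--Penrose identities, from which one reads off that $B_i B_i^\dagger$ is symmetric ($(B_i B_i^\dagger)^T = B_i B_i^\dagger$) and idempotent ($(B_i B_i^\dagger)^2 = B_i B_i^\dagger$) for $i \in \{1,2\}$. A symmetric idempotent matrix is an orthogonal projector, so each of $B_1 B_1^\dagger$ and $B_2 B_2^\dagger$ is the orthogonal projector onto its own range.

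Next I would identify the range of $B_i B_i^\dagger$ with $\range(B_i)$. The inclusion $\range(B_i B_i^\dagger) \subseteq \range(B_i)$ is immediate from the definition of matrix multiplication, and the reverse inclusion follows from $B_i = B_i B_i^\dagger B_i$ (another Moore--Penrose identity), since this shows every column of $B_i$ lies in the range of $B_i B_i^\dagger$. This step can also be cited directly from a standard reference such as~\cite{DSB:09}.

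Having established that $B_i B_i^\dagger$ is the orthogonal projector onto $\range(B_i)$ for $i=1,2$, the conclusion is immediate: the orthogonal projector onto a given subspace of $\real^m$ is unique (any two symmetric idempotents with the same range agree on that range and vanish on its orthogonal complement, hence agree as linear maps on $\real^m$), so the hypothesis $\range(B_1) = \range(B_2)$ forces $B_1 B_1^\dagger = B_2 B_2^\dagger$.

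There is no real obstacle here; the statement is essentially a restatement of the uniqueness of orthogonal projections together with the identification of $B B^\dagger$ as such a projection. The only care needed is to avoid circular reasoning by grounding the projector interpretation in the Moore--Penrose axioms rather than assuming it. For a short paper the cleanest write-up simply cites these properties from~\cite{DSB:09} and concludes in one line.
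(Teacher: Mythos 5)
Your proposal is correct and follows exactly the paper's argument: the paper dispatches this lemma in one line by noting that $B_i B_i^\dagger$ is the orthogonal projection onto $\range(B_i)$ and that such a projection is unique, which is precisely the reasoning you spell out (with the Moore--Penrose axioms supplying the symmetric-idempotent and range identifications). Your version is simply a more detailed write-up of the same proof.
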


The proof follows from the uniqueness of the orthogonal projection
operator on a subspace.

\begin{lemma}\longthmtitle{\cite[Proposition~4.4.10]{DSB:09}}\label{l:invariance-spectrum-change-order}
Let $A \in \real^{m \times n}$ and $B \in \real^{n \times m}$. Then,
$\specnz(AB) = \specnz(BA)$.  \oprocend
\end{lemma}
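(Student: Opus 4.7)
The plan is to invoke the standard ``eigenvalue transfer'' trick, which yields both set inclusions by a symmetric argument. For the forward direction $\specnz(AB) \subseteq \specnz(BA)$, I would fix any $\lambda \in \specnz(AB)$ together with an eigenvector $v \in \cplx^m \setminus \{0\}$ so that $ABv = \lambda v$, and then left-multiply by $B$ to obtain $BA(Bv) = \lambda (Bv)$. If $Bv$ can be shown to be nonzero, this exhibits $\lambda$ as a nonzero eigenvalue of $BA$ with eigenvector $Bv$, giving $\lambda \in \specnz(BA)$.

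The one nontrivial step is verifying $Bv \neq 0$, and here the hypothesis $\lambda \neq 0$ is used in an essential way. If $Bv$ were zero then $ABv = A(Bv) = 0$; but simultaneously $ABv = \lambda v \neq 0$, because $v \neq 0$ and $\lambda \neq 0$. This contradiction forces $Bv \neq 0$. Swapping the roles of $A$ and $B$ (which also swaps the ambient spaces $\cplx^m$ and $\cplx^n$) delivers the reverse inclusion $\specnz(BA) \subseteq \specnz(AB)$ by the identical argument, and combining the two inclusions yields the claimed equality $\specnz(AB) = \specnz(BA)$.

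Since this is a textbook result cited directly from Bernstein, I do not anticipate any genuine obstacle; the only point worth flagging is that restricting to nonzero eigenvalues is not cosmetic. Because $AB$ is $m \times m$ while $BA$ is $n \times n$, the zero eigenvalue may appear with different multiplicities (or not at all) on the two sides, so in general only the \emph{nonzero} parts of the spectra coincide. A heavier alternative would be to derive the polynomial identity $t^{n}\det(tI_m - AB) = t^{m}\det(tI_n - BA)$, which simultaneously recovers the nonzero multiplicities, but for the stated lemma the short eigenvector argument above is both simpler and sufficient.
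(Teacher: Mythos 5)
Your proof is correct. The paper does not prove this lemma at all---it simply cites \cite[Proposition~4.4.10]{DSB:09}---so there is no in-paper argument to compare against; your eigenvector-transfer argument (with the key observation that $\lambda \neq 0$ forces $Bv \neq 0$) is the standard proof of the cited fact, and your remark that the restriction to nonzero eigenvalues is essential because $AB$ and $BA$ have different sizes is accurate.
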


\end{document}